\documentclass{eptcs}

\usepackage{pslatex}
\usepackage{amsfonts}
\usepackage{amsmath}
\usepackage{amsthm}
\usepackage{amssymb}
\usepackage{todonotes}
\usepackage{color}
\usepackage{tikz}

\usetikzlibrary{positioning,calc}

\sloppy

\pgfdeclarelayer{background}
\pgfdeclarelayer{regionbg}
\pgfsetlayers{background,regionbg,main}

\newcommand{\background}[4]
{%
  \begin{pgfonlayer}{background}
    \path (#1.west |- #2.north)+(-0.1,0.1) node [fill=none, minimum height=0pt] (a1) {};
    \path (#3.east |- #4.south)+(+0.1,-0.1) node [fill=none, minimum height=0pt] (a2) {};
    \path[fill=white,rounded corners, draw=black!80, dashed]
    (a1) rectangle (a2);
  \end{pgfonlayer}
}

\newtheorem{theorem}{Theorem}
\newtheorem{lemma}[theorem]{Lemma}

\newtheorem{definition}{Definition}

\newcommand{\nats}{\mathbb{N}}

\newcommand{\fun}[3]{#1 \,:\, #2 \to #3}

\newcommand{\nnreals}{\mathbb{R}^{\geq 0}}
\newcommand{\mset}[1]{\lbrace #1 \rbrace}
\newcommand{\msetcond}[2]{\lbrace #1 \, | \, #2 \rbrace}

\newcommand{\clockval}{{\tt X}}

\newcommand{\trans}{\longrightarrow}
\newcommand{\transarg}[1]{\overset{#1}{\longrightarrow}}

\newcommand{\setto}{\leftarrow}

\newcommand{\eps}{\epsilon}
\newcommand{\powersetof}[1]{2^{#1}}

\newcommand{\op}{{\bf op}}
\newcommand{\nop}{{\bf nop}}
\newcommand{\push}{{\bf push}}
\newcommand{\pop}{{\bf pop}}
\newcommand{\test}{{\bf test}}
\newcommand{\reset}{{\bf reset}}

\newcommand{\interval}{\mathcal{I}}
\newcommand{\reference}{\vdash}

\newcommand{\tpda}{\mathcal{T}}
\newcommand{\tpdastates}{Q^{\mathcal{T}}}
\newcommand{\tpdastate}{q}
\newcommand{\tpdainit}{q^{\mathcal{T}}_{init}}
\newcommand{\tpdastack}{\Gamma^{\mathcal{T}}}
\newcommand{\tpdatrans}{\Delta^{\mathcal{T}}}
\newcommand{\tpdaclocks}{X^{\mathcal{T}}}

\newcommand{\pda}{\mathcal{P}}
\newcommand{\pdastates}{Q^{\mathcal{P}}}

\newcommand{\pdainit}{q^{\mathcal{P}}_{init}}
\newcommand{\pdainput}{\Sigma^{\mathcal{P}}}
\newcommand{\pdastack}{\Gamma^{\mathcal{P}}}
\newcommand{\pdatrans}{\Delta^{\mathcal{P}}}

\newcommand{\dur}{\delta}
\newcommand{\confs}{{\it Conf}(\tpda)}
\newcommand{\trule}[1]{\big\langle #1 \big\rangle}
\newcommand{\prefix}{{\mathbf 0}}
\newcommand{\suffix}{{\mathbf 1}}
\newcommand{\cclock}{x_{control}}

\newcommand{\maxconst}{c_{max}}
\newcommand{\transpose}{^\top}

\newcommand{\shadow}[1]{#1^\bullet}
\newcommand{\Max}{Max}
\newcommand{\supports}{\preceq}
\newcommand{\frag}{/}
\newcommand{\ang}[1]{\langle #1 \rangle}

\newcommand{\tmp}{{\tt tmp}}
\newcommand{\tempstates}{{\tt Tmp}}
\newcommand{\floor}[1]{\lfloor #1 \rfloor}

\renewcommand{\emptyset}{\varnothing}

\AtBeginDocument{\RequirePackage{lmodern, times}}

\begin{document}

\pagestyle{headings}


\title{Zenoness for Timed Pushdown Automata}

\author{Parosh Aziz Abdulla
\and Mohamed Faouzi Atig \and
Jari Stenman}


\maketitle


\begin{abstract}
Timed pushdown automata are pushdown automata extended with
a finite set of real-valued clocks. Additionaly,
each symbol in the stack is equipped with a value representing its age.
The enabledness of a transition may depend on the values of the clocks and
the age of the topmost symbol. Therefore, dense-timed pushdown automata subsume
both pushdown automata and timed automata. We have previously shown that the
reachability problem for this model is decidable. In this paper, we study
the zenoness problem and show that it is {\sc ExpTime}-complete.
\end{abstract}


\section{Introduction}\label{sec:introduction}

Pushdown automata \cite{BEM97,Sch02b,EHRS00,ES01} and timed automata \cite{AlurD94,BL-litron08,BCFL04} are two of the most 
widely used models in verification.
Pushdown automata are used as models for (discrete) recursive systems, 
whereas timed automata model timed (nonrecursive) systems. 
Several models have been proposed that extend pushdown automata with timed behaviors \cite{BER94,DIBKS00,Dang03,DBIK04,EmmiM06}.

We consider the model of (Dense-)Timed Pushdown Automata (TPDA),
introduced in \cite{abdulla2012dense}, that subsumes 
both pushdown automata and timed automata. 
As in the case of 
a pushdown automaton, a TPDA has a stack which can be modified 
by pushing and popping.
A TPDA extends pushdown automata with time in
the sense that the automaton (1) has a finite set of real-valued
clocks, and (2) stores with each stack symbol its (real-valued) age.
Pushing a symbol adds it on top of the stack with an
initial age chosen nondeterministically from a given interval.  
A pop transition removes
the topmost symbol from the stack provided that 
it matches the symbol specified by the transition, and  that
its age lies within a given interval.  
A TPDA can also
perform timed transitions, which simulate the passing of time.  
A timed transition synchronously increases the values of all clocks and the
ages of all stack symbols with some non-negative real number.  
The values of the clocks can be tested for inclusion in a given interval or nondeterministically reset
to a value in a given interval.
The model yields a transition system that is infinite in two dimensions;
the stack contains an unbounded number of symbols, and each symbol is
associated with a unique real-valued clock.

In \cite{abdulla2012dense}, we showed that the reachability problem, i.e. the problem of deciding
whether there exists a computation from the initial state to some target state,
is decidable (specifically, {\sc ExpTime}-complete).
In this paper, we address the zenoness problem for TPDA. The zenoness problem is the problem
of deciding whether there is a computation that contains infinitely many discrete transitions
(i.e. transitions that are not timed transitions) in \emph{finite time} \cite{AlurD90,Alur91,Tripakis99}.
Zeno computations may represent specification errors, 
since these kinds of runs are not possible in real-world systems.
We show that the zenoness problem for TPDA can be reduced to the
problem of deciding whether a pushdown automaton has an infinite run
with the labelling $a^{\omega}$.
The latter problem is polynomial in the size of the pushdown automaton,
which is itself exponential in the size of the TPDA.

\subsection*{Related Work}

The works in \cite{BER94,DIBKS00,Dang03,DBIK04,EmmiM06} consider
pushdown automata extended with clocks.
However, these models separate the timed part and the pushdown part of the automaton,
which means that the stack symbols are not
equipped with clocks.
%

In \cite{BMP10}, the authors define the class of  {\em extended
  pushdown timed automata}. An extended pushdown timed automaton  is a
pushdown automaton enriched with a set of clocks, with an additional
stack used to store/restore clock valuations. 
In our model, clocks  are associated with stack symbols and
store/restore operations are disallowed.
The two models are quite different.
This is illustrated, for instance, by the fact that the reachability problem
is undecidable in their case.

In \cite{ Trivedi:2010}, the authors introduce {\em recursive timed automata},  a model where clocks are considered as variables.  A recursive timed automaton allows passing the values of clocks using either {\em pass-by-value} or {\em pass-by-reference} mechanism. This feature is not supported in our  model since  we do not allow pass-by-value communication between procedures. Moreover, in the recursive timed automaton model, the local clocks of the caller procedure  are stopped until the called procedure returns.
The authors show decidability of the reachability 
problem when either all clocks are passed by reference or none is passed by reference.
This is the model that is most similar to ours, 
since in both cases, the reachability problem reduces to
the same problem for a pushdown automaton that is abstract-time bisimilar
to the timed system.
%


In a recent work
\cite{PFJ12}
 we have shown decidability
of the reachability problem for
{\it discrete-timed} pushdown automata, where time is interpreted as
being incremented in discrete steps and thus the ages of clocks
and stack symbols are in the natural numbers.
This makes the reachability problem much simpler to solve, and the
method of  \cite{PFJ12} cannot be extended to the dense-time case.

Finally, the zenoness problem for different kinds of timed systems is
well studied in the literature (see, e.g., \cite{Alur91,herbreteau2012efficient} for timed automata
and \cite{abdulla2005decidability} for dense-timed Petri nets).







\section{Preliminaries}\label{sec:preliminariesx}

We use $\nats$  and $\nnreals$ to denote  the set of natural numbers  and non-negative reals, respectively.
For values $n, m \in \nats$, we denote by the intervals 
$[n:m]$, $(n:m)$ $[n:m)$, $(n:m]$, $[n: \infty)$ and $(n: \infty)$
the sets of values $r \in \nnreals$ satisfying the constraints
$n \leq r \leq m$, $n < r < m$, $n \leq r < m$, $n < r \leq m$, $n \leq r$,
and $n < r$, respectively.
We let $\interval$ denote the set of all such intervals.

For a non-negative real number $r \in \nnreals$, with $r = n + r'$ 
$n \in \nats$, and  $r' \in [0:1)$, we let $\floor{r} = n$ denote the 
\emph{integral part}, and ${\it frac}(r) = r'$ denote the \emph{fractional part} of $r$.
Given a set $S$, we use $\powersetof{S}$ for the powerset of $S$.
For sets $A$ and $B$, $\fun{f}{A}{B}$ denotes a (possibly partial) function 
from $A$ to $B$. 
We write $f(a) = \bot$ when $f$ is undefined at $a \in A$.
We use $dom(f)$ and $range(f)$ to denote the domain
and range of $f$. We write $f[a \setto b]$ to denote the function
$f'$ such that $f'(a) = b$ and $f'(x) = f(x)$ for $x \not = a$.
The set of partial functions from $A$ to $B$ is written as $[A \to B]$.

Let $A$ be an alphabet. We denote by $A^*$, (resp. $A^+$) the set of all \emph{words}
(resp. non-empty words) over $A$. The empty word is denoted by $\eps$.
For a word $w$, $|w|$ denotes the length of $w$ (we have $|\eps| = 0$).
For words $w_1, w_2$, we use $w_1 \cdot w_2$ for the concatenation of $w_1$ and $w_2$.
We extend the operation $\cdot$ to sets $W_1, W_2$ of words 
by defining $W_1 \cdot W_2 = \msetcond{w_1 \cdot w_2}{w_1\in W_1, w_2 \in W_2}$.
We denote by $w[i]$ the $i$th element $a_i$ of $w = a_1 \dots a_n$.

We use $A^{\omega}$ to denote the set of all infinite words over the alphabet $A$.
We let $a^\omega$ denote the infinite word $aaa\dots$ and write $|w| = \infty$ for any infinite word $w$Ê  over $A$.

We define a binary \emph{shuffle operation} $\otimes$ inductively: 
For $w \in (\powersetof{A})^*$, define $ w \otimes \eps = \eps \otimes w = \mset{w}$.
For sets $r_1, r_2 \in \powersetof{A}$ and words $w_1, w_2 \in (\powersetof{A})^*$,
define $(r_1 \cdot w_1) \otimes (r_2 \cdot w_2) = 
(r_1 \cdot (w_1 \otimes (r_2 \cdot w_2))) \cup
(r_2 \cdot ((r_1 \cdot w_1) \otimes w_2))) \cup
((r_1 \cup r_2) \cdot (w_1 \otimes w_2))$. 

Let $w=a_1\dots a_m$ and $w' = b_1 \dots b_n$ be words in $A^*$.
An \emph{injection} from $w$ to $w'$ is a partial function 
$\fun{h}{\mset{1, \dots, m}}{\mset{1, \dots, n}}$
that is \emph{strictly monotonic}, i.e. for all $i,j \in \mset{1, \dots, m}$, if $i < j$ 
and $h(i), h(j) \not = \bot$, then $h(i) < h(j)$.
The \emph{fragmentation} $w/h$ of $w$ w.r.t. $h$ is the sequence
$\ang{w_0}a_{i_1}\ang{w_1}a_{i_2} \dots \ang{w_{k-1}}a_{i_k}\ang{w_k}$,
where $dom(h) = \mset{i_1, \dots, i_k}$ and $w = w_0 \cdot a_{i_1} \cdot w_1 \cdot \dots \cdot a_{i_k} \cdot w_k$.
The fragmentation $w'/h$ is the sequence
$\ang{w_0'}b_{j_1}\ang{w_1'}\dots \ang{w_{l-1}'}b_{j_l}\ang{w_l'}$,
where $range(h) = \mset{j_1, \dots, j_l}$ and
$w' = w_0' \cdot b_{j_1} \cdot \dots \cdot b_{i_l} \cdot w_l'$.

\subsubsection*{Pushdown Automata}

A pushdown automaton is a tuple $(Q, q_{init}, \Sigma, \Gamma, \Delta)$,
where $Q$ is a finite set of states, 
$q_{init}$ is an initial state, 
$\Sigma$ is a finite input alphabet,
$\Gamma$ is a finite stack alphabet and $\Delta$ is a set of transition rules
of the form $\trule{q, \sigma, \nop, q'}$, $\trule{q, \sigma, \pop(a), q'}$ or $\trule{q, \sigma, \push(a), q'}$, 
where $q, q' \in Q$, $a \in \Gamma$ and $\sigma \in \Sigma \cup \mset{\eps}$.

A configuration is a pair $(q, w)$, where $q \in Q$ and $w \in \Gamma^*$.
We define $\gamma_{init} = (q_{init}, \eps)$ to be the \emph{initial configuration},
meaning that the automaton starts in the initial state and with an empty stack.
We define a transition relation $\rightarrow$ on the set of configurations in the following way:
Given two configurations $\gamma_1 = (q_1, w_1)$, $\gamma_2 = (q_2, w_2)$ and a
transition rule $t = \trule{q_1, \sigma, \op, q_2} \in \Delta$, we write $\gamma_1 \transarg{t} \gamma_2$
if one of the following conditions is satisfied:

\begin{itemize}
\item $\op = \nop$ and $w_2 = w_1$,
\item $\op = \push(a)$ and $w_2 = a \cdot w_1$,
\item $\op = \pop(a)$ and $w_1 = a \cdot w_2$.
\end{itemize}

For any transition rule $t = \trule{q_1, \sigma, \op, q_2} \in \Delta$, define $\Sigma(t) = \sigma$.
We define $\trans = \cup_{t \in \Delta} \transarg{t}$
and let $\trans^*$ be the reflexive transitive closure of $\trans$.
We say that an infinite word $\sigma_1\sigma_2\sigma_3 \dots \in \Sigma^\omega$
is a \emph{trace} of $\pda$ if there exists configurations $\gamma_1, \gamma_2, \gamma_3, \dots$
such that $\gamma_1=\gamma_{init}$, Ê$\gamma_1 \transarg{t_1} \gamma_2 \transarg{t_2} \gamma_3 \transarg{t_3} \dots$, and 
$\Sigma(t_i) = \sigma_i$ for all $i \in \nats$.
We denote by ${\it Traces}(\pda)$ the set of all traces of $\pda$.




\section{Timed Pushdown Automata}

\subsection*{Syntax}

A \emph{Timed Pushdown Automaton} (TPDA) is a tuple
$\tpda = (\tpdastates, \tpdainit, \tpdaclocks, \tpdastack, \tpdatrans)$.
Here, $\tpdastates$ is a finite set of \emph{states},
$\tpdainit \in \tpdastates$ is an initial state, 
$\tpdaclocks$ is a finite set of \emph{clocks},
$\tpdastack$ is a finite \emph{stack alphabet}
and $\tpdatrans$ is finite set of \emph{transition rules} of the form
$(\tpdastate, \op, \tpdastate')$, where $\tpdastate, \tpdastate' \in \tpdastates$ and $\op$ is one of the following:

\begin{description}
\item[$\nop$] An ``empty'' operation that does not modify the clocks or the stack,
\item[$\push(a, I)$] Pushes $a \in \tpdastack$ to the stack with a (nondeterministic) 
initial age in $I \in \interval$,
\item[$\pop(a, I)$] Pops the topmost symbol if it is $a$ and its age is in $I \in \interval$,
\item[$\test(x, I)$] Tests if the value of $x \in \tpdaclocks$ is within $I \in \interval$,
\item[$\reset(x, I)$] Sets the value of $x \in \tpdaclocks$ (nondeterministically) to some value
in $I \in \interval$.
\end{description}

Intuitively, a transition rule $\trule{\tpdastate, \op, \tpdastate'}$ means
that the automaton is allowed to move from state $\tpdastate$ to state $\tpdastate'$
while performing the operation $\op$. The $\nop$ operation can be used to switch states
without changing the stack or the values of clocks.

\subsection*{Semantics}

The semantics of TPDA is defined by a transition relation over the set of \emph{configurations}.
A configuration is a tuple $(\tpdastate, \clockval, w)$, where $\tpdastate \in \tpdastates$ is a state,
$\fun{\clockval}{\tpdaclocks}{\nnreals}$ is a \emph{clock valuation} which assigns concrete values to clocks,
and $w = (a_1, y_1) \dots (a_n, y_n) \in (\tpdastack \times \nnreals)^*$ is a \emph{stack content}.
In other words, the stack content is a sequence of pairs, each pair consisting of a symbol and its age.
Here, $(a_1, y_1)$ is on the top and $(a_n, y_n)$ is on the bottom of the stack.
Given a TPDA $\tpda$, we denote by $\confs$ the set of all configurations of $\tpda$.

The transition relation consists of two types of transitions; \emph{discrete} transitions, 
which correspond to applications of the transition rules,
and \emph{timed} transitions, which simulate the passing of time.

\paragraph{Discrete Transitions.}

Let $t = (\tpdastate, \op, \tpdastate') \in \tpdatrans$ be a transition rule and let
$\gamma = (\tpdastate, \clockval, w)$ and $\gamma' = (\tpdastate', \clockval', w')$ be configurations.
We have $\gamma \transarg{t} \gamma'$ if one of the following conditions is satisfied:

\begin{itemize}
\item $\op = \nop$, $w' = w$ and $\clockval' = \clockval$,
\item $\op = \push(a, I)$, $w' = (a, v) w$ for some $v \in I$, and $\clockval' = \clockval$,
\item $\op = \pop(a, I)$, $w = (a, v)w'$ for some $v \in I$, and $\clockval' = \clockval$,
\item $\op = \test(x, I)$, $w' = w$, $\clockval' = \clockval$ and $\clockval(x) \in I$,
\item $\op = \reset(x, I)$, $w' = w$, and $\clockval' = \clockval[x \setto v]$ for some $v \in I$.
\end{itemize}
%

\paragraph{Timed Transitions.}

Let $r \in \nnreals$ be a real number. Given a clock valuation $\clockval$, let $\clockval^{+r}$ be the function
defined by $\clockval^{+r}(x) = \clockval(x) + r$ for all $x \in X$. 
For any stack content $w = (a_1, y_1) \dots (a_n, y_n)$,
let $w^{+r}$ be the stack content $(a_1, y_1+r) \dots (a_n, y_n+r)$.
Let $\gamma = (\tpdastate, \clockval, w)$ and $\gamma' = (\tpdastate', \clockval', w')$ be configurations.
Then $\gamma \transarg{r} \gamma'$ if and only if  $q' = q$, $\clockval' = \clockval^{+r}$ and $w' = w^{+r}$.
%

\paragraph{Computations.}

%

A \emph{computation} (or \emph{run}) $\pi$  is a (finite or infinite) sequence of  the form 
$(\gamma_1,  {\tau_1},  \gamma_2) (\gamma_2, {\tau_2} ,\gamma_3)  \cdots$ (written as $\gamma_1 \transarg{\tau_1} \gamma_2 \transarg{\tau_2} \gamma_3  \cdots$) such that    $\gamma_i \transarg{\tau_i} \gamma_{i+1}$ for all $1\leq i \leq |\pi|$.
For $\tau \in (\tpdatrans \cup  \nnreals)$, we define $Disc(\tau) = 1$ if $\tau \in \tpdatrans$ and 
$Disc(\tau) = 0$ if $\tau \in \nnreals$. Then, 
the number of discrete transitions in $\pi$ is defined as $|\pi|_{disc}=\sum_{i=1}^{|\pi|} Disc(\tau_i)$.
Note that if $|\pi| = \infty$, then it may be the case that $|\pi|_{disc} = \infty$.

In this paper, we will consider the \emph{duration} of transitions.
Given a $\tau \in (\tpdatrans \cup \nnreals)$, the duration $\dur(\tau)$ is defined in the following way:

\begin{itemize}
\item  $\dur(\tau) = 0$ if $\tau \in \tpdatrans$. Discrete transitions have no duration.
\item $\dur(\tau) = \tau$ if $\tau \in \nnreals$.
\end{itemize}

For a computation $\pi$, we define the duration $\dur(\pi)$ to be $\sum_{i=1}^{|\pi|} \dur(\tau_i)$.
If the automaton can perform infinitely many discrete transitions in finite time, it
exhibits a behavior called \emph{zenoness}.

\begin{definition}[Zenoness]
A computation $\pi$ is \emph{zeno} if it contains infinitely many discrete transitions 
and has a finite duration, i.e. 
if $|\pi|_{disc} = \infty$ and $\dur(\pi) \leq c$ for some $c \in \nats$.
$\pi$ is \emph{non-zeno} if it is not zeno.
\end{definition}

The \emph{zenoness problem} is the question whether a given TPDA contains a zeno run starting from the initial configuration:

\begin{definition}[The Zenoness Problem]
  Given a TPDA $\tpda$, decide if there exists a 
  computation $\pi = \gamma_{init} \trans \gamma_1 \trans \gamma_2 \trans \dots$
  from the initial configuration of $\tpda$ such that $\pi$ is zeno.
\end{definition}

Given two computations $\pi = \gamma_1 \transarg{\tau_1} \gamma_2 \transarg{\tau_2} \gamma_3 \transarg{\tau_3} \cdots$
and $\pi'$, we say that $\pi'$ is a \emph{prefix} of $\pi$ if $\pi = \pi'$ or 
$\pi' = \gamma_1 \transarg{\tau_1} \gamma_2 \transarg{\tau_2} \cdots \transarg{\tau_{n-1}} \gamma_n$ for some $1 \leq n$.
We say that $\pi'$ is a \emph{suffix} of $\pi$ if either
$\pi' = \pi$ or $\pi' = \gamma_n \transarg{\tau_n} \gamma_{n+1} \transarg{\tau_{n+1}} \cdots$ for some 
$n \in \nats$.
We define the \emph{concatenation} of a finite computation 
$\pi = \gamma_1 \transarg{\tau_1} \gamma_2 \transarg{\tau_2} \cdots \transarg{\tau_{n-1}} \gamma_n$ with a (finite or infinite)
computation $\pi' = \gamma_1' \transarg{\tau_1'} \gamma_2' \transarg{\tau_2'} \cdots$,
where $\gamma_n = \gamma_1'$,
as $\pi \cdot \pi' = \gamma_1 \transarg{\tau_1} \cdots \transarg{\tau_{n-1}} \gamma_n 
\transarg{\tau_{1}'} \gamma_2' \transarg{\tau_2'} \cdots$.

Let $\pi = \pi_1 \cdot \pi_2$ be a computation. We call the suffix $\pi_2$ a \emph{unit suffix}
if $\dur(\pi) < 1$. The question whether a TPDA $\tpda$ has a zeno run starting from the initial configuration 
can be reduced to the question whether there exists a run from the initial configuration
which contains a zeno unit suffix:

\begin{lemma}\label{lemma:model1}
A TPDA $\tpda$ contains a zeno run iff $\tpda$ contains a run $\pi = \pi_1 \cdot \pi_2$ such that
$\pi_2$ is zeno and $\dur(\pi_2) < 1$.
\end{lemma}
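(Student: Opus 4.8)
The plan is to prove both directions. The ``if'' direction is immediate: if $\tpda$ contains a run $\pi = \pi_1 \cdot \pi_2$ with $\pi_2$ zeno, then $\pi$ itself is a run of $\tpda$ (starting from the initial configuration), it contains infinitely many discrete transitions because $\pi_2$ already does, and $\dur(\pi) = \dur(\pi_1) + \dur(\pi_2)$; since $\dur(\pi_1)$ is a finite sum of durations of a \emph{finite} prefix it is some real number $c_1$, and $\dur(\pi_2) < 1$, so $\dur(\pi) \leq \lceil c_1 \rceil + 1 \in \nats$, hence $\pi$ is zeno.

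For the ``only if'' direction, suppose $\tpda$ has a zeno run $\pi$ from the initial configuration, so $|\pi|_{disc} = \infty$ and $\dur(\pi) \leq c$ for some $c \in \nats$. The key observation is that the partial sums $d_n := \dur(\gamma_{init} \transarg{\tau_1} \cdots \transarg{\tau_n} \gamma_{n+1})$ form a nondecreasing sequence bounded above by $c$, hence they converge to some limit $D \leq c$. First I would show that $\pi$ contains infinitely many discrete transitions occurring after a point where the accumulated duration exceeds $D - 1$: formally, pick $N$ large enough that $d_N > D - 1$, which is possible since $d_n \to D$; then the suffix $\pi_2$ of $\pi$ starting at $\gamma_{N+1}$ still contains infinitely many discrete transitions (only finitely many of the discrete transitions of $\pi$ lie in the prefix, since a finite prefix has finitely many transitions), and $\dur(\pi_2) = D - d_N < 1$ (using that timed durations telescope: $\dur(\pi_2) = \sum_{i > N} \dur(\tau_i) = D - d_N$, invoking convergence of the tail). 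Writing $\pi_1$ for the prefix up to $\gamma_{N+1}$ and $\pi_2$ for this suffix gives $\pi = \pi_1 \cdot \pi_2$ with $\pi_2$ zeno and $\dur(\pi_2) < 1$, as required.

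The only mild subtlety — the ``main obstacle'', such as it is — is bookkeeping around the infinite sum $\dur(\pi) = \sum_{i=1}^{\infty} \dur(\tau_i)$: one must note that this series has nonnegative terms and bounded partial sums, so it converges (equals its supremum $D$), and that consequently the tail sums $\sum_{i>N} \dur(\tau_i)$ tend to $0$, which is what lets us choose $N$ making $\dur(\pi_2) < 1$. One also needs the elementary fact that a finite prefix of $\pi$ contains only finitely many discrete transitions, so that discarding it leaves the ``infinitely many discrete transitions'' property intact for the suffix. Both points are routine; no structural property of TPDA is needed beyond the definitions of $\dur$ and $|\cdot|_{disc}$ already given.
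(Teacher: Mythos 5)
Your proof is correct and follows essentially the same route as the paper: cut the zeno run at a finite point beyond which less than one time unit remains, and note that the suffix still contains infinitely many discrete transitions, hence is a zeno unit suffix. Your bookkeeping via the convergent series of durations (choosing $N$ with $d_N > D-1$, where $D$ is the total duration) is a mild, if anything slightly more careful, variant of the paper's choice of the longest prefix of duration at most $c-1$ for the smallest integer bound $c$ on $\dur(\pi)$.
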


\begin{proof}
 We prove both directions:
  \begin{description}
  \item[\it If:] By the definition of zenoness.
  \item[\it Only if:]  Assume $\pi$ is a zeno run of $\tpda$. 
    Then there exists a smallest $n \in \nats$ such that
    $\dur(\pi) \leq n$. Call it $c$.
    This means that the longest prefix $\pi'$ of $\pi$ for which $\dur(\pi') \leq c - 1$
    contains finitely many discrete transitions. We have that after
    $\pi'$, the next transition in $\pi$ will be a timed transition $\gamma \transarg{r} \gamma'$ for some $r \in \nnreals$,
    and $\dur(\pi') + r > c - 1$. 
    Now, let $\pi_1 = \pi' \cdot \gamma \transarg{r} \gamma'$, and let $\pi_2$ be the remaining suffix in $\pi$.
    We can conclude that $\dur(\pi_2) = c - \dur(\pi_1) < c - (c - 1) = 1$. 
  \end{description}
\end{proof}

In the rest of the paper, we will show how to decide whether $\tpda$ contains a run that has a zeno unit suffix. Intuitively, 
given a TPDA $\tpda$, we will  construct a pushdown automaton $\pda$
which simulates the behavior of $\tpda$. The pushdown automaton $\pda$ operates in two modes.

Initially, $\pda$ runs in the first mode, in which it simulates the behavior of $\tpda$ exactly as described in \cite{abdulla2012dense}.
While $\pda$ runs in the first mode, all transitions are labelled with $\eps$.
At any time, $\pda$ may guess that it can simulate a unit suffix.
In this case, $\pda$ switches to the second mode, in which it reads symbols from a unary alphabet (say $\mset{a}$)
while simulating discrete transitions of $\tpda$.
The question whether $\tpda$ contains a unit suffix then reduces to
the question whether ${\it Traces}(\pda)$ includes $a^\omega$.



\section{Symbolic Encoding}\label{sec:symbolicx}

In this section, we  show how to construct a symbolic PDA $\pda$ that simulates the
behavior of a TPDA $\tpda$. The PDA uses a symbolic \emph{region} encoding to represent the infinitely many
clock valuations of $\tpda$ in a finite way. The notion of regions was introduced in the classical paper
on timed automata \cite{AlurD94}, in which a timed automaton is simulated by a
region automaton 
(a finite-state automaton that encodes the regions in its states). This abstraction relies on the set of clocks 
being fixed and finite. Since a TPDA may in general operate on unboundedly many clocks (the stack is unbounded, and each symbol has an age),
we cannot rely on this abstraction. 
Instead, we use regions of a special form as stack symbols in $\pda$. For each symbol in the stack of $\tpda$,
the stack of $\pda$ contains, at the same position, a region that relates the stack symbol with all clocks.
A problem with this approach is that we might need to record relations between clocks and stack symbols
that lie arbitrarily far apart in the stack.
However, in \cite{abdulla2012dense}, we show that it is enough to enrich the regions in finite way (by recording the relationship between clocks and adjacent stack symbols), thus keeping the stack alphabet of $\pda$ finite.

\subsection*{Regions}
A region is a word over sets, where each set consists of a number of \emph{items}.
There are \emph{plain items}, which represent the values of clocks and the topmost stack symbols.
In addition, this set includes a reference clock $\reference$, which is always 0 except when simulating a pop transition.
Furthermore, we have \emph{shadow items} which record the values of the corresponding plain items 
in the region below. 
Shadow items are used to remember the time that elapses while the plain symbols they represent are not on the top of the stack.

To illustrate this, assume that the region $R_1$ in Figure \ref{fig:example_region} is the topmost region in the stack.
$R_1$ records the integral values and 
the relationships between the clocks $x_1, x_2$, the topmost stack symbol $a$ and the reference clock $\reference$.
It also relates these symbols to the values of $x_1, x_2$, $b$ and $\reference$ in the previous topmost region.
Now, if we simulate the pushing of $c$ with inital age in $[0:1]$, one of the possible resulting regions is $R_2$.
The region $R_2$ uses $\shadow{x_1}$, $\shadow{x_2}$ and $\shadow{\reference}$ to record the previous values of the clocks
(initially, their values are identical to those of their plain counterparts). The value of the previous topmost symbol $a$ 
is recorded in $\shadow{a}$. Finally, the region relates the new topmost stack symbol $c$ with all the previously mentioned symbols.

We define the set $Y = X \cup \Gamma \cup \mset{\reference}$ of plain items and
a corresponding set $\shadow{Y} = \shadow{X} \cup \shadow{\Gamma} \cup \mset{\shadow{\reference}}$
of shadow items. We then define the set of \emph{items} $Z = Y \cup \shadow{Y}$.

Let $\maxconst$ be the largest constant in the definition of $\tpda$. We denote by $\Max$ the set 
$\mset{0, 1, \dots, \maxconst, \infty}$. A \emph{region} $R$ is a word 
$r_1 \dots r_n \in (\powersetof{Z \times \Max})^+$ such that the following holds:

\begin{itemize}
\item $\sum_{i=1}^{n} |(\Gamma \times \Max) \cap r_i| = 1$ 
and $\sum_{i=1}^{n} |(\shadow{\Gamma} \times \Max) \cap r_i| = 1$.
There is exactly one occurrence of a stack symbol and one occurrence of a shadow stack symbol.
\item $\sum_{i=1}^{n} |(\mset{\vdash} \times \Max) \cap r_i| = 1$ 
and $\sum_{i=1}^{n} |(\mset{\shadow{\vdash}} \times \Max) \cap r_i| = 1$.
There is exactly one occurrence of $\vdash$ and one occurrence of $\shadow{\vdash}$.
\item For all clocks $x \in X$, $\sum_{i=1}^{n} |(\mset{x} \times \Max) \cap r_i| = 1$ 
and $\sum_{i=1}^{n} |(\mset{\shadow{x}} \times \Max) \cap r_i| = 1$. 
Each plain clock symbol and shadow clock symbol occurs exactly once.
\item $r_i \not = \emptyset$ for all $2 \leq i \leq n$. Only the first set may be empty.
\end{itemize}


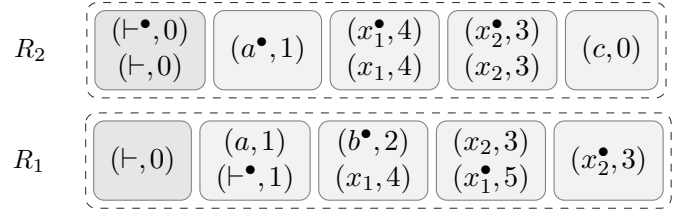
\begin{figure}
\centering

\begin{tikzpicture}
\tikzstyle{letter}=[fill=black!5, minimum width=30pt, minimum height=30pt, rectangle, rounded corners, thin, draw=black!50, inner sep=0pt]
\tikzstyle{separator}=[fill=none, draw=none, minimum height=40pt]
\tikzstyle{zero}=[fill=black!10]
\tikzstyle{reglabel}=[fill=none, draw=none, font=\small]

\node [letter,zero] (l0) 
{
\begin{tabular}{c}
$(\vdash, 0)$ \\
\end{tabular}
};

\node [letter, right=0.1cm of l0] (l1) 
{
\begin{tabular}{c}
$(a, 1)$ \\
$(\shadow{\vdash}, 1)$
\end{tabular}
};

\node [letter, right=0.1cm of l1] (l2) 
{
\begin{tabular}{c}
$(\shadow{b}, 2)$ \\
$(x_1, 4)$
\end{tabular}
};

\node [letter, right=0.1cm of l2] (l3) 
{
\begin{tabular}{c}
$(x_2, 3)$ \\
$(\shadow{x_1}, 5)$
\end{tabular}
};

\node [letter, right=0.1cm of l3] (l4) 
{
\begin{tabular}{c}
$(\shadow{x_2}, 3)$ \\
\end{tabular}
};

\node [letter,zero, above=0.4cm of l0, xshift=1mm] (u0) 
{
\begin{tabular}{c}
$(\shadow{\vdash}, 0)$ \\
$(\vdash, 0)$
\end{tabular}
};

\node [letter, right=0.1cm of u0] (u1) 
{
\begin{tabular}{c}
$(\shadow{a}, 1)$ \\
\end{tabular}
};

\node [letter, right=0.1cm of u1] (u2) 
{
\begin{tabular}{c}
$(\shadow{x_1}, 4)$ \\
$(x_1, 4)$
\end{tabular}
};

\node [letter, right=0.1cm of u2] (u3) 
{
\begin{tabular}{c}
$(\shadow{x_2}, 3)$ \\
$(x_2, 3)$
\end{tabular}
};

\node [letter, right=0.1cm of u3] (u4) 
{
\begin{tabular}{c}
$(c, 0)$
\end{tabular}
};

\background{u0}{u0}{u4}{u0};
\background{l0}{l0}{l4}{l0};

\node [reglabel, left=0.5cm of u0] {$R_2$};
\node [reglabel, left=0.5cm of l0] {$R_1$};

\end{tikzpicture}

\caption{Two examples of regions}
\label{fig:example_region}
\end{figure}

For items $z \in Z$, if we have $(z, k) \in r_i$ for some $i \in \mset{1, \dots, n}$ and some (unique) $k \in \Max$,
then define ${\it Val}(R, z) = k$ and ${\it Index}(R, z) = i$. Otherwise, define ${\it Val}(R, z) = \bot$ and ${\it Index}(R, z) = \bot$ (this may be the case for stack symbols). We define $R\transpose = \msetcond{z \in Z}{{\it Index}(R, z) \not = \bot}$.


\subsection*{Operations on Regions}

In order to define the transition rules of the symbolic PDA, we need a number of operations on regions:

\subsubsection*{Testing Satisfiability}

When we construct new regions, we need to limit the values of the items to certain intervals.
To do this, we define what it means for a region to \emph{satisfy} a membership predicate.
Given an item $z \in Z$, an interval $I \in \interval$ and a region $R$ such that $z \in R\transpose$,
we write $R \vDash (z \in I)$ if and only if one of the following conditions is satisfied:

\begin{itemize}
\item $Index(R, z) = 1$, $Val(R, z) \not= \infty$ and $Val(R, z) \in I$,
\item $Index(R, z) > 1$, $Val(R, z) \not = \infty$ and $Val(R, z) + v \in I$ for all $v \in \nnreals$ such that
$0 < v < 1$,
\item $Val(R, z) = \infty$ and $I$ is of the form $(m:\infty)$ or the form $[m:\infty)$ for some $m \in \nats$.
\end{itemize}

\subsubsection*{Adding and Removing Items}

In the following, we define operations that describe how items are added and deleted from regions.
We also define, in terms of these operations, an operation that assigns a new value to an item.

For a region $R = r_1 \dots r_n$, an item $z \in Z$ and an $k \in \Max$, we define $R \oplus (z, k)$
to be the set of regions $R'$ satisfying the following conditions:

\begin{itemize}
\item $R = r_1 \dots r_{i-1} (r_i \cup \mset{(z, k)}) r_{i+1} \dots r_n$, where $1 \leq i \leq n$
\item $R = r_1 \dots r_i \mset{(z, k)} r_{i+1} \dots r_n$, where $1 \leq i \leq n$
\end{itemize}
We extend the definition of $\oplus$ by letting $R \oplus a$ denote the set $\bigcup_{m \in \Max} R \oplus (a,m)$,
i.e. the set of regions where we have added all possible values of $a$.

We define $R \ominus z$ to be the region $R' = r_1' \dots r_n'$, where, for $1 \leq i \leq n$, 
we have $r_i' = r_i \setminus \mset{\mset{z} \times \Max}$ 
if $r_i \setminus \mset{\mset{z} \times \Max} \not = \emptyset$,
and $r_i' = \eps$ otherwise. 
We extend the definition of $\ominus$ to sets of items in the following way:
$R \ominus \emptyset = R$ and $R \ominus \mset{z_1, \dots z_n} = (R \ominus z_1) \ominus \mset{z_2, \dots z_n}$.

Given a region $R$, an item $z \in Z$ and an interval $I \in \interval$, 
we define an \emph{assignment} operation. We write $R[z \setto I]$ 
to mean the set of regions $R'$ such that $R' \in (R \ominus z) \oplus z$ and $R' \vDash (z \in I)$.
For any number $n \in \nats$, we write $R[z \setto n]$ to mean $R[z \setto [n:n]]$.

\subsubsection*{Creating New Regions}

When we push a new stack symbol, 
we need to record the values of clocks and the value of the current top-most stack symbol.
The operation ${\it Make}$ takes as arguments a region, a stack symbol, and an interval, 
It constructs the set of regions in which the shadow items record the values of the plain items in the 
old topmost region, and the value of the stack symbol is in the given interval.

Given a region $R$, a stack symbol $a \in \Gamma$ and an interval $I \in \interval$, we define ${\it Make}(R, a \in I)$
to be the set of regions $R'$ such that there are $R_1, R_2, R_3$ satisfying the following:

\begin{itemize}
\item $R_1 = R \ominus (R\transpose \cap \shadow{Y})$,
\item 
If $R_1 = r_1 \dots r_n$, then $R_2 = r_1' \dots r_n'$, 
where $r_i' = r_i \cup \msetcond{(\shadow{y}, k)}{(y, k) \in r_i}$
for $i \in \mset{1, \dots, n}$,
\item $R_3 = R_2 \ominus (R\transpose \cap \Gamma)$,
\item $R' \in R_3 \oplus a$ and $R' \vDash (a \in I)$.
\end{itemize}

\subsubsection*{Passage of Time}

We implement the passage of time by \emph{rotating} the region. 
A rotation describes the effect of the smallest timed transition
that changes the region.
If the leftmost set (i.e. the set which represents items with fractional part 0) is nonempty,
a timed transition, no matter how small, will ``push'' those items out.
If the leftmost set is empty, the smallest timed transition that changes the regions is one
that makes the fractional parts of those items 0.

Given a pair $(z, k) \in Z \times \Max$, define $(z,k)^+ = (z, k')$, 
where $k' = k+1$ if $k < \maxconst$ and $k' = \infty$ otherwise.
For a set $r \in \powersetof{Z \times \Max}$, define $r^+ = \msetcond{(z, k)^+}{(z, k) \in r}$.
For a region $R = r_1 \dots r_n$, we define $R^+ = R'$ such that one of the following conditions is satisfied:

\begin{itemize}
\item $r_1 \not = \emptyset$ and $R' = \emptyset r_1 \dots r_n$,
\item $r_1 = \emptyset$ and $R' = r_n^+ r_1 \dots r_{n-1}$.
\end{itemize}
We denote by $R^{++}$ the set $\mset{R, R^+, (R^+)^+, ((R^+)^+)^+, \dots}$. Note that this set is finite.

\subsubsection*{Product}

When we simulate a pop transition, the region that we pop contains the most recent values of all clocks.
On the other hand, the region below it contains shadow items that record relationships between items 
further down the stack. We need to keep all of this information. To do this, we 
define a product operation $\odot$ that merges the information contained in two regions.
For regions $P=p_1\dots p_{|P|}$ and $Q = q_1 \dots q_{|Q|}$, and an injection $h$ 
from $\mset{1, \dots, |P|}$ to $\mset{1, \dots, |Q|}$, we write
$P \supports_h Q$ iff the following conditions are satisfied:

\begin{itemize}
\item $Val(P, \shadow{y}) = Val(Q, y)$ for all $y \in P\transpose \cap Y$,
\item For every $i > 1$, $h(i) \not = \bot$ iff there exists a $y \in Y$
such that $Index(P, y) = i$,
\item $h(1) = 1$,
\item For all $y \in Y$, $i \in \mset{1, \dots, |P|}$ and $j \in \mset{1, \dots, |Q|}$,
if $Index(P, y) = i$ and $Index(Q, \shadow{y}) = j$, then $h(i) = j$.
\end{itemize}
We say that $P$ \emph{supports} $Q$, written $P \supports Q$, if $P \supports_h Q$ for some $h$.
Let $P \frag h = p_{i_1}\ang{P_1}p_{i_2} \dots p_{i_m}\ang{P_m}$
and $Q \frag h = q_{j_1}\ang{Q_1}q_{j_2} \dots q_{j_m}\ang{Q_m}$.
We define $p_k' = p_{i_k} \cap (\shadow{Y} \cup \Gamma)$ and
$q_k' = q_{j_k} \cap (X \cup \mset{\vdash})$.
Finally, define $r_1 = p_1' \cup q_1'$ and, for $k \in \mset{2, \dots, m}$, define $r_k = p_k' \cup q_k'$
if $p_k \cup q_k' \not = \emptyset$ and $r_k = \eps$ if $p_k \cup q_k'  = \emptyset$.
Then, $R \in P \odot Q$ if $R = r_1 \cdot R_1 \cdot r_2 \dots r_m \cdot R_m$
and $R_k \in P_k \otimes Q_k$ for $k \in \mset{1, \dots, m}$.


\section{An {\sc Exptime} Upper Bound for the Zenoness Problem}\label{sec:simulation}

In this section, we prove our main result:

\begin{theorem}\label{thm:exptime}
The Zenoness problem for TPDA is in {\sc ExpTime}.
\end{theorem}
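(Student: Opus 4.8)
The plan is to turn the informal two-mode simulation sketched above into a genuine pushdown automaton $\pda$, prove it faithful, and then invoke a known polynomial-time algorithm for a repeated-reachability condition on pushdown systems. First I would restate the goal. By Lemma~\ref{lemma:model1}, $\tpda$ has a zeno run iff it has a run $\pi_1\cdot\pi_2$, with $\pi_1$ finite, such that $\pi_2$ is zeno and $\dur(\pi_2)<1$; if such a run exists then every finite prefix of $\pi_2$ has duration $<1$, and conversely, if $\tpda$ has a run $\pi_1\cdot\pi_2$ with $\pi_1$ finite, infinitely many discrete transitions in $\pi_2$, and every finite prefix of $\pi_2$ of duration $<1$, then $\dur(\pi_2)\le 1$ and so $\pi_1\cdot\pi_2$ is already zeno. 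Hence it suffices to decide whether $\tpda$ admits a run $\pi_1\cdot\pi_2$ with $\pi_1$ finite, infinitely many discrete transitions in $\pi_2$, and every finite prefix of $\pi_2$ of duration strictly below $1$.

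I would then define $\pda$ over the unary alphabet $\mset{a}$. Its stack alphabet is the finite set of regions of Section~\ref{sec:symbolicx}; its control states are triples consisting of a state of $\tpda$, the current topmost region, and a mode bit in $\mset{1,2}$, keeping the topmost region in the control so that $\pda$ can modify it in place with $\nop$ moves while $\push$/$\pop$ moves shift regions to and from the stack. In mode $1$, $\pda$ simulates $\tpda$ exactly as in \cite{abdulla2012dense}: $\nop$ is the identity; $\test(x,I)$ checks $R\vDash(x\in I)$; $\reset(x,I)$ replaces the top by some $R'\in R[x\setto I]$; $\push(a,I)$ replaces the top by some $R'\in{\it Make}(R,a\in I)$, pushing the old top down; $\pop(a,I)$ checks the recorded value of the topmost stack symbol, pops the region below, and makes the new top some element of their $\odot$-product; and a timed transition replaces the top by some $R'\in\future R$. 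All mode-$1$ moves are labelled $\eps$. Assuming w.l.o.g.\ $\maxconst\ge 1$, add a fresh clock $\cclock$ to $\tpdaclocks$. At any point $\pda$ may take an $\eps$-move that resets $\cclock$ to $0$ and switches to mode $2$. Mode $2$ is identical to mode $1$ except that every move simulating a \emph{discrete} transition of $\tpda$ is labelled $a$, and every move simulating a \emph{timed} transition is restricted to those whose resulting topmost region $R'$ satisfies ${\it Val}(R',\cclock)=0$. Since $\cclock$ is reset exactly at the switch and never again, its value later equals the time elapsed since the switch, so ${\it Val}(\cdot,\cclock)=0$ throughout mode $2$ expresses precisely ``every finite prefix of the mode-$2$ run has duration $<1$''.

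Next I would prove: $a^\omega\in{\it Traces}(\pda)$ iff $\tpda$ has a zeno run. This rests on the step-by-step soundness and completeness of the region abstraction of \cite{abdulla2012dense}, which in effect yields a bisimulation between configurations of $\tpda$ and mode-$1$ configurations of $\pda$: initial configurations correspond, and corresponding configurations have matching discrete and timed moves (of equal duration) leading again to corresponding configurations. Reading the same correspondence with the mode-$2$ relabelling and the $\cclock$ restriction shows that, for every reachable configuration $\gamma$ of $\tpda$, a region-encoding of $\gamma$ admits in mode $2$ an infinite computation with infinitely many $a$-moves and ${\it Val}(\cdot,\cclock)=0$ throughout iff $\tpda$ has a run from $\gamma$ with infinitely many discrete transitions all of whose finite prefixes have duration $<1$. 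Together with the restatement of the first paragraph, this is the claimed equivalence.

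Finally, complexity. A region is determined, up to its length (linear in $|\tpdaclocks|$), by a choice of plain/shadow stack symbol and, for each of the $O(|\tpdaclocks|)$ clock, shadow-clock and reference items, a position and a value in $\Max$; so there are exponentially many regions, the control set of $\pda$ is exponential, and $\pda$ has size exponential in $|\tpda|$ and is computable in exponential time. Deciding $a^\omega\in{\it Traces}(\pda)$ amounts to asking whether $\pda$ has an infinite computation from $\gamma_{init}$ taking an $a$-labelled move infinitely often --- a B\"uchi repeated-reachability condition on a pushdown system, solvable in time polynomial in $|\pda|$ by the saturation algorithms of \cite{BEM97}. Running a polynomial-time procedure on an exponential-size instance gives an {\sc ExpTime} decision procedure for zenoness. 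The main obstacle I expect is the \emph{completeness} half of the region simulation --- showing that \emph{every} dense-time run of $\tpda$, not merely every reachable configuration, is matched move-for-move by a region run of $\pda$ --- since that is exactly what forces the special shape of regions and the shadow items of Section~\ref{sec:symbolicx}; checking that the $\cclock$ bookkeeping faithfully encodes ``total elapsed time $<1$'' is comparatively routine once $\maxconst\ge 1$.
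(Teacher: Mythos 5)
Your proposal is correct and follows essentially the same route as the paper: reduce via Lemma~\ref{lemma:model1} to detecting a unit suffix, build a two-mode region-based PDA over the unary alphabet with a control clock $\cclock$ bounding the suffix duration (your condition ${\it Val}(R',\cclock)=0$ is exactly the paper's $R'\vDash(\cclock\in[0:1))$), and decide $a^\omega\in{\it Traces}(\pda)$ in time polynomial in the exponential-size $\pda$ using \cite{BEM97}. The only differences are cosmetic (topmost region stored in the control state instead of manipulated on the stack via temporary states), and, like the paper, you delegate the faithfulness of the region simulation to \cite{abdulla2012dense}.
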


The rest of this section will be devoted to the proof of Theorem \ref{thm:exptime}.
Given a TPDA $\tpda = (\tpdastates, \tpdainit, \tpdastack, \tpdaclocks, \tpdatrans)$,
we construct an (untimed) PDA $\pda = (\pdastates, \pdainit, \pdainput, \pdastack, \pdatrans)$
such that $\pda$ simulates zeno runs of $\tpda$. 
More specifically, $\pda$ simulates a zeno run of $\tpda$
by first simulating the prefix, and then simulating the unit suffix.
In order to do this, $\pda$ runs in two modes. In the first mode, it simulates the prefix.
In the second mode, it simulates the suffix while keeping track of the fact  that the value of a special \emph{control clock}
$\cclock$ is smaller than 1.
We now describe the components of $\pda$.
 
The states of $\pda$ are composed of two disjoint sets; 
the \emph{genuine} states $\mset{\prefix, \suffix} \times \tpdastates$
and some \emph{temporary} states $\tempstates$. 
Each genuine state $(m, q)$ contains a state $q$ from $\tpdastates$ and a 
symbol $m$ indicating the current simulation mode. 
If $m = \prefix$, $\pda$ is currently simulating the prefix of
a run. Conversely, if $m = \suffix$, $\pda$ is simulating the suffix.
The temporary states are  used for intermediate transitions between configurations
containing genuine states.
We assume that we have functions $\tmp$, $\tmp_1$ and $\tmp_2$
that input arguments and map them to  a unique element in $\tempstates$.
The initial state $\pdainit$ of $\pda$ is the state $(\prefix, \tpdainit)$.
The input alphabet $\pdainput$ is the unary alphabet $\mset{a}$.
The automaton reads an $a$ when (and only when) it simulates a discrete transition 
in the suffix. When it simulates any other transition, it reads $\eps$.
Let $\cclock \not \in \tpdaclocks$ be a special control clock. The stack alphabet $\pdastack$ contains all possible 
regions over the items $Z \cup \mset{\cclock, \shadow{\cclock}}$.
The purpose of the control clock is to limit the duration of the suffix.
We will now describe the set $\tpdatrans$ of transition rules:

\paragraph{$\nop$}

For each transition rule $\trule{\tpdastate_1, \nop, \tpdastate_2} \in \tpdatrans$,
the set $\pdatrans$ contains the transition rules 
$\trule{(\prefix, \tpdastate_1), \eps, \nop, (\prefix, \tpdastate_2)}$
and $\trule{(\suffix, \tpdastate_1), a, \nop, (\suffix, \tpdastate_2)}$.
Nop transitions are used for switching states without modifying the clocks or the stack.

\paragraph{$\test(x \in I)$}

We simulate a test transition in $\tpda$ with two transition in $\pda$.
If the topmost region satisfies the constraint, we pop it 
and move to a temporary state. Since a test transition is not supposed to modify the stack,
we push back the same region we popped, while moving to the second genuine state.
Formally, for each transition rule $\tau = \trule{\tpdastate_1, \test(x \in I), \tpdastate_2} \in \tpdatrans$,
and region $R$ such that $R \vDash (x \in I)$, the set $\tpdatrans$ contains the transition rules:

\begin{itemize}
\item $\trule{(\prefix, \tpdastate_1), \eps, \pop(R), \tmp(\tau, R, \prefix)}$,
\item $\trule{\tmp(\tau, R, \prefix), \eps, \push(R), (\prefix, \tpdastate_2)}$ (for simulating the prefix),
\item $\trule{(\suffix, \tpdastate_1), a, \pop(R), \tmp(\tau, R, \suffix)}$,
\item $\trule{\tmp(\tau, R, \suffix), \eps, \push(R), (\suffix, \tpdastate_2)}$ (for simulating the suffix).
\end{itemize}

\paragraph{$\reset(x \setto I)$}

We simulate reset transitions by popping the topmost region and pushing it back, in a similar way to 
test transitions, except that the given clock is nondeterministically set to some value in the given interval.
Formally, for each transition rule $\tau = \trule{\tpdastate_1, \reset(x \setto I), \tpdastate_2} \in \tpdatrans$, and each
pair of regions $R, R'$ such that $R' \in R[x \setto I]$, the set $\tpdatrans$ contains the transition rules:

\begin{itemize}
\item $\trule{(\prefix, \tpdastate_1), \eps, \pop(R), \tmp(\tau, R, \prefix)}$,
\item $\trule{\tmp(\tau, R, \prefix), \eps, \push(R'), (\prefix, \tpdastate_2)}$ (for simulating the prefix),
\item $\trule{(\suffix, \tpdastate_1), a, \pop(R), \tmp(\tau, R, \suffix)}$,
\item $\trule{\tmp(\tau, R, \suffix), \eps, \push(R'), (\suffix, \tpdastate_2)}$ (for simulating the suffix).
\end{itemize}

\paragraph{$\push(a, I)$} 

We will need two temporary states to simulate a push.
First, we move to a temporary state while popping the topmost region.
This is done in order to remember its content. Then, we push back that region
unmodified. Finally, we push a region containing the given symbol, 
constructed from the previous topmost region such that the initial age of the symbol is in the given
interval. 
Formally, for each transition rule $\tau = \trule{\tpdastate_1, \push(a, I), \tpdastate_2} \in \tpdatrans$, and each
pair of regions $R, R'$ such that $R' \in Make(R, a \in I)$, the set $\tpdatrans$ contains the transition rules:

\begin{itemize}
\item $\trule{(\prefix, \tpdastate_1), \eps, \pop(R), \tmp_1(\tau, R, \prefix)}$,
\item $\trule{\tmp_1(\tau, R, \prefix), \eps, \push(R), \tmp_2(\tau, R, \prefix)}$,
\item $\trule{\tmp_2(\tau, R, \prefix), \eps, \push(R'), (\prefix, \tpdastate_2)}$ (for simulating the prefix),
\item $\trule{(\suffix, \tpdastate_1), a, \pop(R), \tmp(\tau, R, \suffix)}$,
\item $\trule{\tmp_1(\tau, R, \suffix), \eps, \push(R), \tmp_2(\tau, R, \suffix)}$,
\item $\trule{\tmp(\tau, R, \suffix), \eps, \push(R'), (\suffix, \tpdastate_2)}$ (for simulating the suffix).
\end{itemize}

\paragraph{$\pop(a, I)$} 

The simulation of pop transitions also requires two temporary states.
First, we pop the topmost region and move to a temporary state.
Then, in order to update the new topmost region, we need to first pop it, then rotate and merge it with the
first region we popped, and finally push back the result.
Formally, for each transition rule $\tau = \trule{\tpdastate_1, \pop(a, I), \tpdastate_2} \in \tpdatrans$, and all 
regions $R_1, R_1', R_2,$ such that $R_2 \vDash (a \in I)$ 
and $R_1' \in  \bigcup \msetcond{R_2 \odot R'}{R' \in R_1^{++} \text{ and } R' \supports R_2}$,
the set $\tpdatrans$ contains the transition rules:

\begin{itemize}
\item $\trule{(\prefix, \tpdastate_1), \eps, \pop(R_2), \tmp_1(\tau, R_2, \prefix)}$,
\item $\trule{\tmp_1(\tau, R_2, \prefix), \eps, \pop(R_1), \tmp_2(\tau, R_2, \prefix)}$,
\item $\trule{\tmp_2(\tau, R_2, \prefix), \eps, \push(R'), (\prefix, \tpdastate_2))}$ (for simulating the prefix),

\item $\trule{(\suffix, \tpdastate_1), a, \pop(R_2), \tmp_1(\tau, R_2, \suffix)}$,
\item $\trule{\tmp_1(\tau, R_2, \suffix), \eps, \pop(R_1), \tmp_2(\tau, R_2, \suffix)}$,
\item $\trule{\tmp_2(\tau, R_2, \suffix), \eps, \push(R'), (\suffix, \tpdastate_2))}$ (for simulating the suffix).
\end{itemize}

\subsubsection*{Timed Transitions}

For every state $\tpdastate \in \tpdastates$ and every pair of regions $R, R'$ 
such that $R' \in R^+[\reference\,\, \setto [0:0]]$ (this is a singleton set),
the set $\pdatrans$ contains the transition rules:

\begin{itemize}
\item $\trule{(\prefix, \tpdastate), \eps, \pop(R), \tmp(timed, \tpdastate, R, \prefix)}$,
\item $\trule{\tmp(timed, \tpdastate, R, \prefix), \eps, \push(R'), (\prefix, \tpdastate)}$.
\end{itemize}
Additionally, if $R' \vDash (\cclock \in [0:1))$, then $\pdatrans$ also contains the transitions

\begin{itemize}
\item $\trule{(\suffix, \tpdastate), \eps, \pop(R), \tmp(timed, R, \suffix)}$,
\item $\trule{\tmp(timed, R, \suffix), \eps, \push(R'), (\suffix, \tpdastate)}$.
\end{itemize}

\subsubsection*{Switching Modes}

In addition to the transitions described so far, $\pda$ must also be able to
switch from mode $\prefix$ to mode $\suffix$. This is done nondeterministically 
at any point in the simulation of the prefix. When the automaton changes mode, 
it resets the control clock $\cclock$.
For each state $\tpdastate \in \tpdastates$ and region $R$,
the set $\pdatrans$ contains the transition rules
$\trule{(\prefix, \tpdastate), \eps, \pop(R), \tmp(switch, \tpdastate, R)}$
and  $\trule{\tmp(switch, \tpdastate, R), \eps, \push(R'), (\suffix, \tpdastate)}$,
where $R'$ is the region in the singleton set $R[\cclock \setto 0]$.

\paragraph{Correctness.}

The simulation of the prefix (mode $\prefix$) works exactly like the simulation
in \cite{abdulla2012dense}. The simulation of the suffix (mode $\suffix$) only
imposes a restriction on the duration of the remaining run, namely that the
value of the control clock $\cclock$ may not reach 1.
In other words, the automaton may simulate any unit suffix.
Additionally, it reads an $a$ each time it simulates a discrete transition.
This, together with Lemma \ref{lemma:model1} implies the following result:

\begin{lemma}\label{lemma:main1}
There exists a zeno run in $\tpda$ if and only if
for the corresponding symbolic automaton $\pda$, we have $a^\omega \in Traces(\pda)$.
\end{lemma}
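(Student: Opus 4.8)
The plan is to prove Lemma~\ref{lemma:main1} by combining the reduction to unit suffixes from Lemma~\ref{lemma:model1} with a faithfulness argument for the symbolic simulation, which we split into the two modes of $\pda$.

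\textbf{Step 1: reduce to unit suffixes.} By Lemma~\ref{lemma:model1}, $\tpda$ has a zeno run iff it has a run $\pi = \pi_1 \cdot \pi_2$ with $\pi_2$ zeno and $\dur(\pi_2) < 1$. So it suffices to show that $\pda$ has a trace $a^\omega$ iff $\tpda$ has such a decomposed run. The only-if and if directions are handled by the two bisimulation-style correspondences below.

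\textbf{Step 2: soundness and completeness of the region encoding (mode $\prefix$).} I would invoke the correspondence already established in \cite{abdulla2012dense}: there is a time-abstract bisimulation between configurations $(\tpdastate, \clockval, w)$ of $\tpda$ and configurations of $\pda$ in mode $\prefix$ whose stack is a sequence of regions encoding $(\clockval, w)$. Concretely, each discrete transition of $\tpda$ is matched by the corresponding block of (possibly temporary) $\eps$-transitions of $\pda$ — $\pop/\push$ for $\nop$, $\test$, $\reset$, the two-$\push$ block for $\push(a,I)$, and the $\pop$-$\pop$-$\push$ block using $\odot$, $^{++}$ and $\supports$ for $\pop(a,I)$ — and each timed transition $\transarg{r}$ is matched by an appropriate number of applications of the rotation rule $R^+[\reference \setto [0:0]]$ (together with the $\future{}$ closure built into the pop rule). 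The key structural point to state is that the extra control clock $\cclock$ is carried inertly through every prefix rule, so its value faithfully tracks elapsed time once mode $\suffix$ is entered. I would cite the relevant lemma from \cite{abdulla2012dense} rather than re-prove region soundness here.

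\textbf{Step 3: the mode switch and mode $\suffix$.} The switch rule resets $\cclock$ to $0$, which exactly models marking the start of the suffix $\pi_2$. In mode $\suffix$, the rules are identical to mode $\prefix$ except (a) every simulated \emph{discrete} transition reads one letter $a$, and (b) timed transitions are only permitted while $R' \vDash (\cclock \in [0:1))$, i.e. while total elapsed time in the suffix is $< 1$. Hence an infinite trace $a^\omega$ of $\pda$ corresponds precisely to: a finite prefix simulation (mode $\prefix$), a switch, and then an infinite suffix simulation performing infinitely many discrete transitions while $\cclock$ never reaches $1$ — that is, within total duration $< 1$. By Step~2 applied in mode $\suffix$ this lifts to a genuine run of $\tpda$ of the form $\pi_1\cdot\pi_2$ with $|\pi_2|_{disc} = \infty$ and $\dur(\pi_2) < 1 \le 1$, which is zeno. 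Conversely, given such a $\pi_1 \cdot \pi_2$ in $\tpda$, simulate $\pi_1$ in mode $\prefix$, take the switch transition, and simulate $\pi_2$ in mode $\suffix$; since $\dur(\pi_2) < 1$ every timed transition of $\pi_2$ keeps $\cclock$ strictly below $1$, so the suffix rules are always enabled and the simulation reads $a$ for each of the infinitely many discrete steps, yielding $a^\omega \in Traces(\pda)$. Finally, apply Lemma~\ref{lemma:model1} to conclude.

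\textbf{Main obstacle.} The delicate part is Step~3's quantitative claim that $\cclock$ stays below $1$ exactly when $\dur(\pi_2) < 1$: regions only record integral parts and ordering of fractional parts, so I must argue that the region constraint $\cclock \in [0:1)$ is satisfied along a simulated run iff the corresponding concrete sum of timed-transition durations is $< 1$. This follows because $\cclock$ starts at exactly $0$ after the switch (an integral value, not an open constraint) and the rotation semantics of $^+$ faithfully track when a clock's value crosses an integer boundary; but making this precise — in particular handling the boundary between "fractional part $0$" and "just above", and ensuring that an infinite zeno run of $\tpda$ never forces $\cclock$ to value $1$ while an infinite sequence of arbitrarily small timed steps summing to something $<1$ is always simulable — is where the argument needs care, and I would lean on the detailed region lemmas of \cite{abdulla2012dense} for the bisimulation bookkeeping.
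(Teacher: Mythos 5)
Your proposal is correct and takes essentially the same route as the paper's own (very terse) proof: reduce to zeno unit suffixes via Lemma~\ref{lemma:model1}, invoke the correctness of the region-based simulation from \cite{abdulla2012dense} for both modes, and observe that the suffix-mode constraint $\cclock \in [0:1)$ together with reading an $a$ on each simulated discrete step captures exactly the runs with a zeno unit suffix. Your Step~3 and the ``main obstacle'' paragraph merely spell out quantitative bookkeeping that the paper leaves implicit.
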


Using our construction, the size of $\pda$ is exponential in the size of $\tpda$.
The problem of checking $a^\omega \in Traces(\pda)$ is 
polynomial in the size of $\pda$ \cite{BEM97}.
This gives membership in {\sc ExpTime} for Theorem \ref{thm:exptime}.

\section{An {\sc Exptime} Lower Bound for the Zenoness Problem}

The following theorem  gives {\sc Exptime}-hardness  for the zenoness problem for TPDA (matching its  upper bound). 

\begin{theorem}
The zenoness problem for TPDA is {\sc ExpTime}-hard.
\end{theorem}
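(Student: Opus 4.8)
The plan is to prove {\sc ExpTime}-hardness by reducing from a known {\sc ExpTime}-complete problem, namely the reachability (acceptance) problem for alternating pushdown automata with polynomial-bounded work, or equivalently the reachability problem for the dense-timed pushdown automata of \cite{abdulla2012dense}, which was shown to be {\sc ExpTime}-complete. The cleanest route is to reduce the reachability problem for TPDA to the zenoness problem for TPDA. Given a TPDA $\tpda$ and a target state $q_f$, the idea is to build a new TPDA $\tpda'$ that first simulates $\tpda$ faithfully and, once it observes that $q_f$ has been reached, enters a small gadget that performs infinitely many discrete transitions in zero (hence finite) additional time; conversely, if $q_f$ is never reached, $\tpda'$ should have no zeno run.

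First I would handle the ``productive'' direction: add to $\tpda'$ a fresh state $q_{loop}$ together with a transition $\trule{q_f, \nop, q_{loop}}$ and a self-loop $\trule{q_{loop}, \nop, q_{loop}}$. Since $\nop$ transitions are discrete and have duration $0$, any run of $\tpda$ reaching $q_f$ can be extended to a run of $\tpda'$ that takes the $\nop$ self-loop forever, yielding $|\pi|_{disc} = \infty$ and $\dur(\pi) < \infty$; this is a zeno run. Second, for the converse, I need $\tpda'$ to have \emph{no} zeno run unless it passes through $q_f$: this requires arranging that every infinite computation of the simulating part of $\tpda'$ (before reaching $q_{loop}$) must let time diverge. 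The standard trick is to force a positive amount of time to elapse between any two consecutive discrete transitions of the simulation: introduce an auxiliary clock $x_{tick}$ that is reset on every simulated discrete transition and is tested to lie in, say, $(0:\infty)$ before the next simulated discrete transition may fire; alternatively, test a pushed stack symbol of nondeterministic age to be positive. Either device guarantees that an infinite run that stays inside the simulation region has unbounded duration, hence is non-zeno, so the only zeno runs are exactly those that reach $q_{loop}$, i.e. those witnessing reachability of $q_f$ in $\tpda$.

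The construction is clearly polynomial in the size of $\tpda$ (we add one clock, two states, and $O(|\tpdastates|)$ transitions for the forcing gadget, or rather we only need to intercept existing transitions), so the reduction runs in polynomial time, and it preserves the answer in both directions by the two arguments above. Combining this with the fact that the reachability problem for TPDA is {\sc ExpTime}-hard \cite{abdulla2012dense} gives {\sc ExpTime}-hardness of the zenoness problem, which together with Theorem~\ref{thm:exptime} yields {\sc ExpTime}-completeness.

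The main obstacle I anticipate is the correctness of the forcing gadget in the dense-time setting: one must be careful that requiring strictly positive elapsed time between discrete steps does not accidentally restrict the simulated behaviour of $\tpda$ (so that reachability of $q_f$ is preserved) while still ruling out Zeno behaviour of the simulation part. Because time is dense, between any two discrete steps of $\tpda$ we may freely insert an arbitrarily small positive timed transition without changing which discrete transitions are enabled — ages and clock values only grow, and the guards are intervals — so inserting $\varepsilon$-delays is sound; the nontrivial check is that these inserted delays, summed over an infinite run, diverge, which follows because the forcing clock must traverse a fixed positive-length interval infinitely often, and the region/fractional-part argument of \cite{abdulla2012dense} shows this cannot happen in bounded time. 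Making this last point rigorous — essentially that ``infinitely many resets of a clock that is afterwards tested positive implies unbounded total delay'' — is the one step that needs genuine argument rather than bookkeeping.
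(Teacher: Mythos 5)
Your choice of source problem is legitimate (reachability for TPDA is {\sc ExpTime}-complete by \cite{abdulla2012dense}), and reducing reachability to zenoness by appending a $\nop$ self-loop at the target state is a genuinely different route from the paper, which instead reduces from the {\sc ExpTime}-complete emptiness problem for the intersection of a pushdown language with $n$ regular languages \cite{Heussner:2012:LMCS}: there the pushdown part of $\tpda$ simulates the PDA, clocks encode the states of the finite automata, a control clock keeps time frozen during the simulation, and only after a successful acceptance check does the automaton enter a special state where time may pass and a $\nop$ self-loop yields the zeno run. The forward half of your reduction (reaching $q_f$ gives a zeno run) is fine; the problem is the converse, and there your construction has a real gap.

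The forcing gadget does not do what you need. Testing $x_{tick}\in(0:\infty)$ only forces each delay between consecutive simulated discrete transitions to be strictly positive, and strictly positive delays can sum to a finite value (take the $k$-th delay to be $2^{-k}$), so the simulating part of $\tpda'$ can still have zeno runs that never reach $q_f$. The argument you sketch for divergence is vacuous here: with the guard $(0:\infty)$ there is no ``fixed positive-length interval'' that the clock must traverse; that would require a guard with a positive lower bound, say $[1:\infty)$. But strengthening the guard in this way destroys the other direction, and so does your claimed soundness of inserting small delays: it is not true that ``ages and clock values only grow, and the guards are intervals'' makes insertion harmless, because the intervals of a TPDA have right endpoints and may be point intervals. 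For example, a $\push(a,[0:0])$ immediately followed by $\pop(a,[0:0])$, or two $\test(x,[1:1])$ operations at the same instant, are possible in $\tpda$ but become impossible once any positive delay is forced between consecutive discrete transitions, so reachability of $q_f$ is not preserved. Your gadget is thus caught in a dilemma -- a weak guard does not exclude zenoness of the simulation, a strong guard does not preserve the simulated behaviour -- and the one step you yourself flag as needing ``genuine argument'' is exactly the step that fails. To make a reduction from reachability work you would need a different mechanism for ruling out zeno runs inside the simulation (or follow the paper and reduce from an untimed intersection problem, placing the zeno behaviour strictly after the acceptance check).
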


\begin{proof}
The following problem is {\sc ExpTime}-complete \cite{Heussner:2012:LMCS}:
Given a labelled pushdown automaton $\pda$ recognizing the language $L$ and 
$n$ finite automata $A_1, \dots, A_n$ recognizing languages $L_1, \dots, L_n$,
is the intersection $L \cap \bigcap_{i=1}^n L_i$ empty?
This problem can be reduced, in polynomial time, to the zenoness problem for a TPDA $\tpda$.
The pushdown part of $\tpda$ simulates $\pda$, while a clock $x_i$ encodes
the state of the finite automaton $A_i$. We can use an additional control clock to
ensure that no time passes during the simulation.
We may assume w.l.o.g. that the finite automata are free of $\eps$-transitions.
An $\eps$-transition of $\pda$ is simulated by the pushdown part of $\tpda$.
A labelled transition of $\pda$ is first simulated by the pushdown part of $\tpda$
and then followed by a sequence of transitions that checks and updates the clocks 
in order to ensure that each finite automaton $A_i$ is able to match the transition.

From a final state of $\pda$, we introduce a series of transitions that checks if all finite-state automata $A_i$
are also in their final states. If they are, we move to a special state of  $\tpda$ from which there exists a zeno run.
In this special state, we remove the restriction that time cannot pass and we add a self-loop performing a $\nop$ operation. Thus, the intersection $L \cap \bigcap_{i=1}^n L_i$
is empty if and only if $\tpda$ does not contain a zeno run.
\end{proof}



\section{Conclusion and Future Work} 

In this paper, we have considered the problem of detecting zeno runs in TPDA.
We showed that the zenoness problem for TPDA is {\sc ExpTime}-complete. 
The proof uses a reduction from the zenoness problem for TPDA to the 
problem of deciding whether $a^\omega$ is contained in the set of  traces of a PDA.
More specifically, given a TPDA $\tpda$, 
we construct a PDA $\pda$ which simulates zeno runs of $\tpda$ and whose size is exponential in the size of $\tpda$.

We are currently considering the problem of computing the minimal (or infimal, if it does not  exist) \emph{reachability cost}
in the model of \emph{priced} TPDA, in which discrete
transitions have \emph{firing costs} and stack contents have \emph{storage costs}, meaning that the cost of taking a timed transition
depends on the stack content.

Another interesting question is whether there are fragments of some suitable metric logic 
for which model checking TPDA is decidable.

\bibliographystyle{eptcs}
\bibliography{biblio}

\end{document}